\newtheoremstyle{theorem}
  {15pt}          
  {15pt}  
  {\sl}  
  {\parindent}
  {\sc}  
  {. }   
  { }    
  {}     
  \theoremstyle{theorem}
\newtheorem{example}{Example}[section]
\newtheorem{theorem}{Theorem}[section]
\newtheorem{corollary}[theorem]{Corollary}
\newtheorem{lemma}[theorem]{Lemma}
\numberwithin{figure}{section}
\DeclareRobustCommand{\greektext}{%
  \fontencoding{LGR}\selectfont\def\encodingdefault{LGR}}
\DeclareRobustCommand{\textgreek}[1]{\leavevmode{\greektext #1}}
 \title[Fractional SI Model]{Fractional-order susceptible-infected model: definition and applications to the study of COVID-19 main
protease}
\author[Abadias, Estrada-Rodriguez, Estrada]{Luciano Abadias$^{1,2}$, Gissell Estrada-Rodriguez$^{3}$, Ernesto
Estrada$^{2,4}$}
\begin{document}

 \vbox to 2.5cm { \vfill }


 \bigskip \medskip

 \begin{abstract}

We propose a model for the transmission of perturbations across the
amino acids of a protein represented as an interaction network. The
dynamics consists of a Susceptible-Infected (SI) model based on the
Caputo fractional-order derivative. We find an upper bound to the
analytical solution of this model which represents the worse-case
scenario on the propagation of perturbations across a protein residue
network. This upper bound is expressed in terms of Mittag-Leffler
functions of the adjacency matrix of the network of inter-amino acids
interactions. We then apply this model to the analysis of the propagation
of perturbations produced by inhibitors of the main protease of SARS
CoV-2. We find that the perturbations produced by strong inhibitors
of the protease are propagated far away from the binding site, confirming
the long-range nature of intra-protein communication. On the contrary,
the weakest inhibitors only transmit their perturbations across a close
environment around the binding site. These findings may help to the
design of drug candidates against this new coronavirus.

 \medskip

{\it MSC 2010\/}: Primary 26A33; Secondary 33E12, 92C40, 05C82

 \smallskip

{\it Key Words and Phrases}: Caputo derivative; Mittag-Leffler matrix functions; Susceptible-Infected
model; COVID-19, SARS CoV-2 protease

 \end{abstract}

 \maketitle

 \vspace*{-16pt}



\section{Introduction}

The presence of a networked structure is one of the fundamental characteristics
of complex systems in general \cite{Networks_1,Networks_2}. It could
be argued that the main function of such networks is that of allowing
the communication between the entities that form its structure. In
the case of proteins, the non-covalent interactions between residues
in their three-dimensional structures form inter-residue networks
\cite{Networks_1,PRN_2}. These networks facilitate that information
about one site is transmitted to and influences the behavior of another.
This phenomenon--the transmission of any perturbation in protein
structure and function from one site to another--is known as allostery,
which represents an essential feature of protein regulation and function
\cite{Communication_1,Communication_2}. Allostery permits that two
residues geometrically distant can interact with each other. As observed
experimentally by Ottemann et al. \cite{Ottemann} a conformational
change of 1{\AA} in a residue can be transmitted to another 100{\AA} apart.
As stated long time ago, such allosteric effects can occur even when
the average protein structure remains unaltered \cite{Cooper_Dryden}.
An important kind of allosteric effect is the transmission of the
changes produced by a ligand interacting with a protein. Such transmission
occurs from the residues proximal to the binding site to other residues
distant from it. Such kind of allosteric interaction is very important
for understanding the effects of drugs on their receptors, which directly
impacts the drug design process \cite{Drug_design}.

It has been stressed by Berry \cite{Berry} that there are striking
similarities between organization schemes at different observation
scales in complex systems, such as allosteric-enzyme networks, cell
population and virus spreading. Recently, Miotto et al. \cite{Miotto}
exploited these similarities between epidemic spreading and a diffusive
process on a protein residue network to prove the capability of propagating
information in complex 3D protein structures. Their analogy
proved useful in estimating important protein properties ranging from
thermal stability to the identification of functional sites \cite{Miotto}.
In the current work, we go a step further in the exploitation of the
analogy between epidemiological models and communication processes
in proteins by considering the inclusion of long-range transmission
effects. For this purpose, we develop here a new fractional-order
Susceptible-Infected (SI) model for the transmission of perturbations
through the amino acids of a protein residue network. Such perturbations
are produced, for instance, by the interactions of the given protein
with inhibitors, such as drugs or drug candidates. We obtain an upper
bound to the exact soluction of this fractional-order SI model
which is expressed in terms of the Mittag-Leffler matrix functions,
and which generalizes the upper bound found by Lee et al.
\cite{Lee_et_al} to the non-fractional (classical) SI model.

Due to its current relevance, we apply the present approach to the
study of the long-range inter-residue communication in the main protease
of the new coronavirus named SARS-CoV-2 \cite{SARS-CoV-2,SARS-CoV-2_1}.
This new coronavirus has produced an outbreak of pulmonary disease
expanding from the city of Wuhan, Hubei province of China to the rest
of the World in about 3 months \cite{COVID-19}. One of the most important
targets for the development of drugs against SARS-CoV-2 is its main
protease, M\textsuperscript{pro}, whose 3-dimensional structure has
been recently resolved and deposited \cite{Crystal-Mpro} in the Protein
Data Bank (PDB) \cite{PDB}. It is a key enzyme for proteolytic processing
of polyproteins in the virus and some chemicals have been found to
bind this protein, representing potential specific drug canditades
against CoV-2 \cite{Crystal-Mpro}. Here we find that important communication
between amino acids in CoV-2 M\textsuperscript{pro} occurs from the
proximities of the binding site to very distant amino acids in other
domains of the protein. These effects produced by the interaction
with inhibitors are transmistted up to 50{\AA} away from the binding
site, confirming the long-range nature of intra-protein communication.
According to our results, it seems that stronger inhibitors transmit
such perturbations to longer inter-residue distances. Therefore, the
current findings are important for the understanding of the mechanisms
of drug action on CoV-2 M\textsuperscript{pro}, which may help to
the design of drug candidates against this new coronavirus.

\section{Antecedents and Motivations}

\subsection{Protein residue networks}

The protein residue networks (PRN) (see ref. \cite{Networks_1} Chapter
14 for details) are simple, undirected and connected graphs $G=\left(V,E\right)$, therefore their
adjacency matrices are symmetric matrices of order $n\times n$ and
have eigenvalues $\lambda_{1}>\lambda_{2}\geq\cdots\geq\lambda_{n}$.
As the matrices are traceless, the spectral radius $\lambda_{1}>0$.
Here $v_{i}\in V,$$\  i=1,\ldots,n$ are the nodes corresponding to the amino
acids of a protein and two nodes $v_{i}$ and $v_{j}$ are connected
by an edge $\left\{ v_{i},v_{j}\right\} \in E$ if the corresponding
residues (amino acids) interacts physically in the protein.  They are
built here by using the information reported on the Protein Data Bank
\cite{PDB} for the protease of CoV-2 as well as its complexes with
three inhibitors (see further). The nodes of the network represent
the \textgreek{a}-carbon of the amino acids. Then, we consider cutoff
radius $r_{C}$, which represents an upper limit for the separation
between two residues in contact. The distance $r_{ij}$ between two
residues $i$ and $j$ is measured by taking the distance between
C\textsubscript{\textgreek{a}} atoms of both residues. Then, when
the inter-residue distance is equal or less than $r_{C}$ both residues
are considered to be interacting and they are connected in the PRN.
The adjacency matrix $A$ of the PRN is then built with elements defined
by

\begin{equation}
A_{ij}=\left\{ \begin{array}{cc}
H\left(r_{C}-r_{ij}\right) & i\neq j,\\
0 & i=j,
\end{array}\right.
\end{equation}
where $H\left(x\right)$ is the Heaviside function which takes the
value of one if $x>0$ or zero otherwise. Here we use the typical
interaction distance between two amino acids, which is equal to 7.0
Å. We have tested distances below and over this threshold obtaining,
in general, networks which are either too sparse or too dense, respectively.
In this work we consider the structures of the free SARS CoV-2 main protease
with PDB code 6Y2E as well as the ones of the SARS CoV-2 with inhibitors
6M0K \cite{Inhibitors_11}, 6YZE \cite{Inhibitors_11} and 6Y2G \cite{Crystal-Mpro}.
For details of preprocessing the reader is directed to \cite{CHAOS}.

\subsection{Standard SI model}\label{Sect2.2}

Here we state the main motivation of using a Susceptible-Infected
(SI) model for studying the effects of inhibitor binding to a protein
residue network in a similar way as an SIS has been used by Miotto
et al. \cite{Miotto}. The selection of an SI model can be understood
by the fact that we are interested in the early times of the dynamics.
At this stage, it has been shown \cite{Lee_et_al} that the SI model
is most suitable than any other model. To motivate the SI model in
the PRN context let us consider that an amino acid is in the binding
site of a protein. Then, this amino acid is susceptible to be perturbed
by the interaction with this inhibitor. Consequently, this residue
can be in one of two states, either waiting to be perturbed (susceptible)
or being perturbed by the interaction. Of course, this amino acid
can transmit this perturbation to any other amino acid in the protein
to which it interacts with. Then, if $\beta$ is the rate at which
such perturbation is transmitted between amino acids, and if $s_{i}\left(t\right)$
and $x_{i}\left(t\right)$ are the probabilities that the residue
$i$ is susceptible or get perturbed at time $t$, respectively, we
can write the dynamics
\begin{equation}
\dfrac{ds_{i}\left(t\right)}{dt}=-\beta s_{i}\left(t\right)x_{i}\left(t\right),\label{eq31}
\end{equation}

\begin{equation}
\dfrac{dx_{i}\left(t\right)}{dt}=\beta s_{i}\left(t\right)x_{i}\left(t\right).\label{eq32}
\end{equation}

Because the amino acids can only be in the states ``susceptible''
or ``perturbed'' we have that $s_{i}\left(t\right)+x_{i}\left(t\right)=1$,
such that we can write

\begin{equation}
\dfrac{dx_{i}\left(t\right)}{dt}=\beta\left(1-x_{i}\left(t\right)\right)x_{i}\left(t\right).
\end{equation}

When we consider all the interactions between pairs of residues in
the PRN we should transform the previous equation into a system of
equations of the following form \cite{Bullo}:

\begin{equation}
\dfrac{dx_{i}\left(t\right)}{dt}=\beta\left(1-x_{i}\left(t\right)\right)\sum_{j\in\mathcal{N}}A_{ij}x_{j}\left(t\right),t\geq t_{0},\label{eq:SI_original}
\end{equation}
where $A_{ij}$ are the entries of the adjacency matrix of the PRN
for the pair of amino acids $i$ and $j$, and $\mathcal{N}=\{1,\ldots,n\}.$ In matrix-vector form becomes:

\begin{equation}
\dfrac{dx\left(t\right)}{dt}=\beta\left[I_{n}-\textnormal{diag}\left(x\left(t\right)\right)\right]Ax\left(t\right),
\end{equation}
with initial condition $x\left(0\right)=x_{0}$. The evolution of
dynamical systems based on the adjacency matrix of a network have been
analyzed by Mugnolo \cite{mugnolo}. It is well-known that \cite{Lee_et_al}:
\begin{enumerate}
\item if $x_{0}\in[0,1]^n$ then $x(t)\in[0,1]^n$ for all $t>0$;
\item $x(t)$ is monotonically non-decreasing in $t$;
\item there are two equilibrium points: $x^{\star}=0$, i.e. no epidemic,
and $x^{\star}=1$, i.e. full contagion;
\item the linearization of the model around the point 0 is given by

\begin{equation}
\dfrac{dx\left(t\right)}{dt}=\beta A\,x\left(t\right),\label{eq:model}
\end{equation}
and the solution diverges when $t\to \infty$, due to the fact that the spectral radius of $A$ is positive;
\item each trajectory with $x_{0}\neq0$ converges asymptotically to $x^{\star}=1$,
i.e. the epidemic spreads monotonically to the entire network.
\end{enumerate}
The SI model can be rewritten as

\begin{equation}
\dfrac{1}{1-x_{i}(t)}\dfrac{dx_{i}\left(t\right)}{dt}=\beta\sum_{j\in\mathcal{N}}A_{ij}\left(1-e^{-\left(-\log\left(1-x_{j}\left(t\right)\right)\right)}\right),
\end{equation}
which is equivalent to

\begin{equation}
\dfrac{dy_{i}\left(t\right)}{dt}=\beta\sum_{j\in\mathcal{N}}A_{ij}f\left(y_{j}\left(t\right)\right),
\end{equation}
where
\begin{equation}
y_{i}\left(t\right)\coloneqq g\left(x_{i}\left(t\right)\right)=-\log\left(1-x_{i}\left(t\right)\right)\in\left[0,\infty\right]\ ,\label{eq: g}
\end{equation}
 and $f\left(y\right)\coloneqq1-e^{-y}=g^{-1}\left(y\right)$.

Lee et al. \cite{Lee_et_al} have considered the following linearized
version of the previous nonlinear equation
\begin{equation}
\dfrac{d\hat{y}\left(t\right)}{dt}=\beta A\textnormal{diag}\left(1-x\left(t_{0}\right)\right)\hat{y}\left(t\right)+\beta b\left(x\left(t_{0}\right)\right),\label{eq: first linearization}
\end{equation}
where $\hat{x}\left(t\right)=f\left(\hat{y}\left(t\right)\right)$
in which $\hat{x}\left(t\right)$ is the approximate solution to the
SI model, $\hat{y}\left(t_{0}\right)=g\left(x\left(t_{0}\right)\right)$
and
\begin{equation}
    b\left(x\right)\coloneqq x+\left(1-x\right)\log\left(1-x\right).\label{eq: b}
\end{equation}
They have found that the solution to this linearized model is \cite{Lee_et_al}:

\begin{equation}
\begin{split}\hat{y}\left(t\right) & =e^{\beta\left(t-t_{0}\right)A\textnormal{diag}\left(1-x\left(t_{0}\right)\right)}g\left(x\left(t_{0}\right)\right)\\
 & +\sum_{k=0}^{\infty}\dfrac{\left(\beta\left(t-t_{0}\right)\right)^{k+1}}{\left(k+1\right)!}\left[A\textnormal{diag}\left(1-x\left(t_{0}\right)\right)\right]^{k}Ab\left(x\left(t_{0}\right)\right).
\end{split}
\end{equation}
When $t_{0}=0$, $x_{i}\left(0\right)=c/n$, $i=1,2,\ldots,n$ for
some positive $c$, the previous equation is transformed to
\begin{equation}
\hat{y}\left(t\right)=\left(1/\gamma-1\right)e^{\gamma\beta tA}\vec{1}-\left(1/\gamma-1+\log\left(\gamma\right)\right)\vec{1},
\end{equation}
where $\gamma=1-c/n$ and $\vec{1}$ is the all-ones vector. Note that the condition $x_{i}\left(0\right)=c/n$ indicates that at initial time every amino acid has the same probability of being perturbed by the inhibitor. Lee et
al. \cite{Lee_et_al} have proved that this solution is an upper bound
to the exact solution of the SI model. This result indicates that
the upper bound to the solution of the SI model is proportional to
the exponential of the adjacency matrix of the network, which is the
source of the subgraph centrality \cite{SC} and of the communicability
function \cite{communicability} between pairs of nodes in it. In
the next section of this work we obtain a generalization of this upper
bound based on a fractional-order SI model, which will also
be formulated there.

\section{Mathematical Results}

\subsection{Definition of the fractional-order SI model}

In the following we will consider a fractional SI model based on the
Caputo fractional derivative of the logarithmic function of $1-x_{i}.$ Here, $x_i$ also denotes the probability
that the residue $i$ get perturbed at time $t.$

First of all, we recall the definition of Caputo fractional derivative.
Given $0<\alpha<1$ and a function $u:[0,\infty)\to\mathbb{R}$, we
denote by $D_{t}^{\alpha}u$ the Caputo  fractional derivative of $u$
of order $\alpha,$ which is given by \cite{Caputo_1}
\[
D_{t}^{\alpha}u\left(t\right)=\int_{0}^{t}h_{1-\alpha}\left(t-\tau\right)u'\left(\tau\right)\,d\tau:=\left(h_{1-\alpha}*u'\right)\left(t\right),\quad t>0,
\]
where $*$ denotes the classical convolution product on $\left(0,\infty\right)$
and $h_{\gamma}\left(t\right)\coloneqq\frac{t^{\gamma-1}}{\Gamma(\gamma)},$
for $\gamma>0.$ Observe that the previous fractional derivative has
sense whenever the function is derivable and the convolution is defined
(for example if $u'$ is locally integrable). The notation $h_{\gamma}$
is very useful in the fractional calculus theory, mainly by the property
$h_{\gamma}*h_{\delta}=h_{\gamma+\delta}$ for all $\gamma,\delta>0.$

Before presenting our model, we state a technical lemma which plays
a key role in the main result of this section.
\begin{lemma} \label{lemma1}
 Let $u:[0,\infty)\to\mathbb{R}$ be a derivable function
with $u\left(0\right)=0,$ and $0<\alpha<1.$ If $D_{t}^{\alpha}u\left(t\right)\geq0$
for all $t>0,$ then $u\left(t\right)\geq0.$
\end{lemma}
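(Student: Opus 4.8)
The plan is to invert the Caputo derivative by convolving with the complementary kernel $h_\alpha$, thereby writing $u$ itself as the convolution of a manifestly nonnegative kernel against the nonnegative function $D_t^\alpha u$. Set $g(t):=D_t^\alpha u(t)=(h_{1-\alpha}*u')(t)$, which by hypothesis satisfies $g(t)\geq 0$ for all $t>0$. The observation driving the argument is that $h_\alpha$ acts as a left inverse to the operation $u'\mapsto h_{1-\alpha}*u'$ at the level of the convolution product.

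Concretely, I would convolve $g$ with $h_\alpha$ and invoke the semigroup identity $h_\alpha*h_{1-\alpha}=h_1$ recalled above, together with the associativity of the convolution on $(0,\infty)$, to obtain
\[
(h_\alpha*g)(t)=\bigl(h_\alpha*(h_{1-\alpha}*u')\bigr)(t)=\bigl((h_\alpha*h_{1-\alpha})*u'\bigr)(t)=(h_1*u')(t).
\]
Since $h_1(t)=t^{0}/\Gamma(1)\equiv 1$, the last convolution is simply $\int_0^t u'(\tau)\,d\tau=u(t)-u(0)$, and the normalization $u(0)=0$ collapses this to $u(t)$. Hence one arrives at the representation $u(t)=(h_\alpha*g)(t)=\int_0^t h_\alpha(t-\tau)\,g(\tau)\,d\tau$.

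From this representation the conclusion is immediate. For $0<\alpha<1$ the kernel $h_\alpha(t-\tau)=(t-\tau)^{\alpha-1}/\Gamma(\alpha)$ is strictly positive on $0\le\tau<t$, because $\Gamma(\alpha)>0$ and $(t-\tau)^{\alpha-1}>0$ there; combined with $g(\tau)\geq 0$, the integrand is nonnegative and therefore $u(t)\geq 0$ for every $t>0$, while the case $t=0$ holds by assumption.

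The only point requiring care — and thus the main obstacle — is the justification of the two formal manipulations: the associativity step $h_\alpha*(h_{1-\alpha}*u')=(h_\alpha*h_{1-\alpha})*u'$ and the identification $(h_1*u')(t)=u(t)-u(0)$. Both rest on $u'$ being locally integrable (the standing assumption under which $D_t^\alpha u$ was defined) together with the local integrability of $h_\alpha$ and $h_{1-\alpha}$ near the origin, which holds because $\alpha-1>-1$ and $-\alpha>-1$. Under these conditions Fubini's theorem legitimizes the interchange of the order of integration and yields associativity, so no regularity beyond what the statement already grants is needed.
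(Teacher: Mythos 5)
Your proof is correct and is essentially identical to the paper's: both rest on the identity $u=(h_{1}*u')=(h_{\alpha}*h_{1-\alpha}*u')=h_{\alpha}*D_{t}^{\alpha}u$ (using $u(0)=0$ and the semigroup property $h_{\alpha}*h_{1-\alpha}=h_{1}$), followed by positivity of the kernel $h_{\alpha}$. You simply run the chain of equalities in the reverse order and add the (welcome, but routine) justification of associativity via Fubini.
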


\begin{proof}
Observe that by hypothesis $\left(h_{1-\alpha}*u'\right)\left(t\right)\geq0,$
therefore
\[
u\left(t\right)=\int_{0}^{t}u'\left(\tau\right)\,d\tau=\left(h_{1}*u'\right)\left(t\right)=\left(h_{\alpha}*h_{1-\alpha}*u'\right)\left(t\right)\geq0.
\]
\end{proof}
Now, we recall that $\beta$ will denote the perturbation rate and
let $s_{i}\left(t\right)$ and $x_{i}\left(t\right)$ be the probabilities
that residue $i$ is susceptible or get perturbed at time $t$, respectively.
Let $0<\alpha<1$, we consider the following fractional model inspired
by \eqref{eq31} and \eqref{eq32}:

\[
\left\{ \begin{array}{l}
{\displaystyle \int_{0}^{t}h_{1-\alpha}\left(t-\tau\right)\dfrac{s_{i}'\left(\tau\right)}{x_{i}\left(\tau\right)}\,d\tau=-\beta^{\alpha}s_{i}\left(t\right),}\\
\\
{\displaystyle \int_{0}^{t}h_{1-\alpha}\left(t-\tau\right)\dfrac{x_{i}'\left(\tau\right)}{s_{i}\left(\tau\right)}\,d\tau=\beta^{\alpha}x_{i}\left(t\right).}
\end{array}\right.
\]
Since  $s_{i}\left(t\right)+x_{i}\left(t\right)=1$, we have
\begin{equation}
\int_{0}^{t}h_{1-\alpha}\left(t-\tau\right)\dfrac{x_{i}'\left(\tau\right)}{1-x_{i}\left(\tau\right)}\,d\tau=\beta^{\alpha}x_{i}\left(t\right).
\end{equation}
Observe that the left-hand side of the above system is the Caputo fractional derivative of the minus logarithmic function of $1-x_{i}$ (see for instance \cite{Logarithmic}), that
is,
\[
D_{t}^{\alpha}(-\log(1-x_{i}))(t).
\]

As in the classical SI model happens, this equation is transformed
into a system of equations when we consider the interactions between
the different residues in the protein according to the PRN. So, the
fractional SI model which we will study is given by
\begin{equation}
\begin{array}{ll}
{\displaystyle \int_{0}^{t}h_{1-\alpha}\left(t-\tau\right)\dfrac{x_{i}'\left(\tau\right)}{1-x_{i}\left(\tau\right)}\,d\tau=\beta^{\alpha}\sum_{j\in\mathcal{N}}A_{ij}x_{j},} & i\in\mathcal{\mathcal{N}},\ t>0,\ x_{i}(0)\in[0,1].\end{array}\label{fractsystem}
\end{equation}
We can rewrite \eqref{fractsystem} in a matrix-vector form:
\begin{equation}\label{fractsystemmatrix}
D_{t}^{\alpha}(-\log(1-x))(t)=\beta^{\alpha}Ax\left(t\right),
\end{equation}
with initial condition $x\left(0\right)=x_{0},$ where $A$ is the adjacency matrix of the PRN. This fractional SI
model, based on the fractional-order derivative, has not been
considered in the literature under our knowledge. Other fractional
compartmental models have been previously discussed in the literature
(see for instance \cite{Fractional_compartments} and references therein).

Note that if $x_{i}(0)=1,$ then by \eqref{fractsystem} we have $\dfrac{x_{i}'\left(\tau\right)}{1-x_{i}\left(\tau\right)}\geq0$
for $s$ close to 0, and that case is not possible. So, we will consider
that $x_{i}^{\star}=1$ is an equilibrium point. The same happens
if $x_{i}^{\star}=0$ by the equations given for $s_{i}.$ Furthermore,
if $x_{i}(0)\in(0,1),$ by Lemma~\ref{lemma1} we have $-\log(1-x_{i}(t))\geq-\log(1-x_{i}(0))>0,$
then $x_{i}(t)\in(0,1),$ and therefore $x_{i}$ is non-decreasing.
We deduce that if $x(0)\in[0,1]^{n}$ then $x(t)\in[0,1]^{n}$ for
all $t>0,$ and there are two equilibrium points: $x^{\star}=0$,
i.e. no epidemic, and $x^{\star}=1$, i.e., full contagion. Also,
each trajectory with $x_{0}\neq0$ converges asymptotically to $x^{\star}=1$,
i.e. the epidemic spreads monotonically to the entire network.

One of the objects of greatest importance in the fractional calculus theory are the Mittag-Leffler functions. Let $\alpha,\nu>0,$ they are defined by
\begin{equation}
E_{\alpha,\nu}(z)=\sum_{k=0}^\infty\frac{z^k}{\Gamma(\alpha k+\nu)}\ ,\ z\in\mathbb{C} .
    \label{eq: ML definition}
\end{equation}
For more details on fractional calculus and Mittag-Leffler functions see the seminal works  \cite{diethelm2010analysis,gorenflo2001fractional,mainardi2000mittag,paris2002exponential,gorenflo2014mittag}. Let us note that when $\alpha=1$ this function reduces to $e^z$.  As the exponential function, the Mittag-Leffler functions can be considered in a matrix framework. We refer the reader to Section \ref{sec: ML matrix functions} for more details on the Mittag-Leffler matrix functions.

Now we consider the linearization of \eqref{fractsystemmatrix}
\begin{equation}\label{fractsystemlinear}
D_{t}^{\alpha}\tilde{x}(t)=\beta^{\alpha}A\tilde{x}(t).
\end{equation}
It is known that the solution of \eqref{fractsystemlinear} is given by
\begin{equation}
\tilde{x}(t)=E_{\alpha,1}\left((\beta t)^{\alpha}A\right)x_{0}:=\sum_{k=0}^{\infty}\frac{(\beta t)^{\alpha k}A^{k}x_{0}}{\Gamma\left(\alpha k+1\right)},\label{tilde}
\end{equation}
where $x_{0}$ is the same initial condition that in the non-linearized
problem. In fact the solution diverges as $t$ goes to infinity, that is,
\begin{equation}
\begin{split}\lim_{t\rightarrow\infty}\tilde{x}_{i}\left(t\right) & =\lim_{t\rightarrow\infty}E_{\alpha,1}\left(\left(\beta t\right)^{\alpha}\lambda_{1}\right)\psi_{1i}\sum_{j=1}^{n}\psi_{1j}x_{0j}\\
 & =\lim_{t\rightarrow\infty}\sum_{k=0}^{\infty}\dfrac{\left(\left(\beta t\right)^{\alpha}\lambda_{1}\right)^{k}}{\varGamma\left(\alpha k+1\right)}\psi_{1i}\sum_{j=1}^{n}\psi_{1j}x_{0j}\\
 & =\infty,
\end{split}
\end{equation}
for all $v_i\in V$ in $G=\left(V,E\right),$ and where $\psi_{1j}$
is the $j$th entry of the eigenvector associated to the spectral
radius $\lambda_{1}$.

Observe that the fractional SI model \eqref{fractsystem} can be rewritten as
\[
D_{t}^{\alpha}y_{i}\left(t\right)=\beta^{\alpha}\sum_{j\in\mathcal{N}}A_{ij}f\left(y_{j}\left(t\right)\right),
\]
where $y_i(t)$ is defined as in (\ref{eq: g}).

Now we consider the Lee-Tenneti-Eun (LTE) type transformation \cite{Lee_et_al}, which is also given in \eqref{eq: first linearization},
which produces the following linearized equation
\begin{equation}
D_{t}^{\alpha}\hat{y}\left(t\right)=\beta^\alpha\hat{A}\hat{y}(t)+\beta^{\alpha}Ab\left(x_0\right),\label{eqlineal}
\end{equation}
where $\hat{A}=A\Omega$ and
 $\Omega:=\text{diag}\left(1-x_{0}\right)$. Analogous to the notation used in (\ref{eq: first linearization}), $\hat{x}\left(t\right)=f\left(\hat{y}\left(t\right)\right)$
in which $\hat{x}\left(t\right)$ is an approximate solution to the
fractional SI model, $\hat{y}$ is the solution of \eqref{eqlineal}
with initial condition $\hat{y}\left(0\right)=g\left(x\left(0\right)\right)$
and $b\left(x_0\right)$ is given in (\ref{eq: b}).
\begin{theorem}\label{theorem1}
For any $t\geq0$, we have
\[
x(t)\preceq\hat{x}(t)=f(\hat{y}(t))\preceq\tilde{x}(t),
\]
under the same initial conditions $x_{0}:=x(0)=\hat{x}(0)=\tilde{x}(0),$ where ${x}(t)\preceq\hat{{x}}(t)$ if $x_i\leq \hat{x}_i$ for all $i=1,\ 2,\ \dotsc,\ n$.
The solution $\hat{y}$ of \eqref{eqlineal} is given by
\begin{equation}
\hat{y}\left(t\right)=E_{\alpha,1}\left((\beta t)^{\alpha}\hat{A}\right)g\left(x_{0}\right)+\sum_{k=0}^{\infty}\frac{(\beta t)^{\alpha\left(k+1\right)}\hat{A}^{k}Ab\left(x_{0}\right)}{\Gamma\left(\alpha\left(k+1\right)+1\right)},\label{y}
\end{equation}
and $\tilde{x}$ is given by \eqref{tilde}. Furthermore, $\|\hat{x}(t)-x(t)\|\to0$
and $\|\tilde{x}(t)-x(t)\|\to\infty$ as $t$ goes to infinity.
\end{theorem}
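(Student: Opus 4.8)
The plan is to split the statement into four stages: (i) derive the closed form \eqref{y} for $\hat y$; (ii) upgrade Lemma~\ref{lemma1} to a vector comparison principle; (iii) deduce the two inequalities from that principle plus the concavity of $f$; and (iv) read off the two limits. \emph{Stage (i).} Equation \eqref{eqlineal} is a linear inhomogeneous Caputo system $D_t^\alpha \hat y = B\hat y + c$ with constant $B=\beta^\alpha\hat A$ and $c=\beta^\alpha Ab\left(x_0\right)$, so I would apply the fractional variation-of-constants formula: the homogeneous part is $E_{\alpha,1}(t^\alpha B)\hat y(0)$ and the particular part is $\int_0^t (t-\tau)^{\alpha-1}E_{\alpha,\alpha}((t-\tau)^\alpha B)\,c\,d\tau$. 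Expanding $E_{\alpha,\alpha}$ via \eqref{eq: ML definition}, the $k$-th integrand is $h_{\alpha k+\alpha}(t-\tau)B^k c$, and the property $h_\gamma\ast h_\delta=h_{\gamma+\delta}$ gives $\int_0^t h_{\alpha k+\alpha}(t-\tau)\,d\tau=h_{\alpha k+\alpha+1}(t)$; summing yields $t^\alpha E_{\alpha,\alpha+1}(t^\alpha B)c$. Substituting $\hat y(0)=g\left(x_0\right)$ together with $B$ and $c$ and re-indexing reproduces \eqref{y} verbatim.

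\emph{Stage (ii).} I would prove the following system analogue of Lemma~\ref{lemma1}: if $M$ is entrywise nonnegative, $w(0)=0$ and $D_t^\alpha w\succeq\beta^\alpha M w$, then $w\succeq0$. Writing $D_t^\alpha w=\beta^\alpha M w+r$ with $r\succeq0$, the formula from Stage (i) gives $w(t)=\int_0^t(t-\tau)^{\alpha-1}E_{\alpha,\alpha}\!\left(\beta^\alpha(t-\tau)^\alpha M\right)r(\tau)\,d\tau$, and every factor is entrywise nonnegative because $M\succeq0$ forces $E_{\alpha,\alpha}(sM)=\sum_k s^kM^k/\Gamma(\alpha k+\alpha)\succeq0$ for $s\geq0$. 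Both operators in play, $\hat A=A\Omega$ and $A$, are entrywise nonnegative since $x_0\in[0,1]^n$, so the principle applies to each.

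\emph{Stage (iii), lower bound.} Since $f$ is increasing, $x\preceq\hat x$ is equivalent to $y\preceq\hat y$ with $y=g(x)$ solving $D_t^\alpha y=\beta^\alpha Af(y)$. Concavity of $f$ gives the tangent inequality $f(y_j)\leq(1-x_{0j})y_j+b(x_{0j})$, with equality at $y_j=g(x_{0j})$ — precisely the identity by which $\Omega$ and $b$ were constructed. Setting $w=\hat y-y$ (so $w(0)=0$) and subtracting the two equations, this inequality together with $A\succeq0$ yields $D_t^\alpha w\succeq\beta^\alpha\hat A w$, whence Stage (ii) gives $w\succeq0$ and thus $x=f(y)\preceq f(\hat y)=\hat x$. \emph{Upper bound --- the crux.} Using monotonicity of $f$, I would aim to show $\hat y\preceq g(\tilde x)$ where $\tilde x<1$ (the region $\tilde x\geq1$ being trivial since $f<1\leq\tilde x$). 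Applying Stage (ii) to $\Phi=g(\tilde x)-\hat y$, it suffices to show that $g(\tilde x)$ is a supersolution of \eqref{eqlineal}, namely $D_t^\alpha g(\tilde x)\succeq\beta^\alpha\hat A g(\tilde x)+\beta^\alpha Ab\left(x_0\right)$. The enabling tool is a one-sided fractional chain rule: $\tilde x=E_{\alpha,1}((\beta t)^\alpha A)x_0$ is componentwise non-decreasing (its series has nonnegative coefficients and $A^kx_0\succeq0$), so for the convex increasing $g$ one gets $D_t^\alpha g(\tilde x)\succeq g'(x_0)\odot D_t^\alpha\tilde x=\tfrac{\beta^\alpha}{1-x_0}\odot A\tilde x$, because $\frac{1}{\Gamma(1-\alpha)}\int_0^t(t-\tau)^{-\alpha}[g'(\tilde x(\tau))-g'(x_0)]\tilde x'(\tau)\,d\tau\geq0$.

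I expect this supersolution inequality to be the main obstacle. The difficulty is that the saturation factor $1/(1-x_{0i})$ produced by the chain rule sits on the outer index $i$, whereas $\hat A=A\,\mathrm{diag}(1-x_0)$ and $b\left(x_0\right)$ carry the inner index $j$; a termwise comparison in fact fails, so one must exploit the full matrix/integral structure (the same mechanism underlying the classical bound of Lee et al.\ \cite{Lee_et_al}) rather than the crude scalar estimate. \emph{Stage (iv), asymptotics.} The divergence $\|\tilde x-x\|\to\infty$ is immediate: $x(t)\in[0,1]^n$ for all $t$, while $\tilde x_i(t)\to\infty$ by the Mittag-Leffler growth along the Perron eigenvector (positivity of $\lambda_1$), as displayed after \eqref{tilde}. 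For $\|\hat x-x\|\to0$ I would combine the lower bound of Stage (iii) with $\hat x=f(\hat y)\prec\vec{1}$: componentwise $0\preceq\hat x-x\preceq\vec{1}-x$, and each trajectory with $x_0\neq0$ satisfies $x(t)\to\vec{1}$ (full contagion, established for \eqref{fractsystem}), so the envelope $\vec{1}-x$ vanishes, forcing $\hat x\to\vec{1}$ and hence $\|\hat x-x\|\to0$.
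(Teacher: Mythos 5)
Your Stages (i), (ii), and the lower bound in Stage (iii) are correct, and Stage (ii) is in fact a useful sharpening: the paper passes from the tangent-line inequality to $D_t^{\alpha}y\preceq D_t^{\alpha}\hat y$ quite tersely, and your comparison principle via entrywise nonnegativity of $E_{\alpha,\alpha}(sM)$ for $M\succeq0$ is exactly what makes that step airtight. Stage (iv) is also fine, including your (correct) reading that the last limit in the statement is $\|\tilde x(t)-x(t)\|\to\infty$.

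The genuine gap is the upper bound $\hat x\preceq\tilde x$, which you yourself flag as ``the main obstacle'' and do not close. Your route --- proving that $g(\tilde x)$ is a supersolution of \eqref{eqlineal} --- fails for a structural reason, not merely a technical one: composing the concavity bound $f(y_j)\le(1-x_{0j})y_j+b(x_{0j})$ with $g$ gives, for every $s\ge x_{0j}$,
\[
s\le\left(1-x_{0j}\right)g(s)+b\left(x_{0j}\right),
\]
so the termwise inequality $\tilde x_j\ge(1-x_{0j})g(\tilde x_j)+b(x_{0j})$ that your supersolution test requires is false pointwise, and there is no evident reason why $g(\tilde x)$ should be a supersolution at all: dominating $\hat y$ does not force that property, so this path appears to be a dead end rather than a repairable estimate. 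Invoking ``the full matrix/integral structure'' is a placeholder, not a proof. The paper closes the bound by a different mechanism, entirely in $x$-space. Since $\hat y_i'\ge0$ (read off from \eqref{y}) and $e^{-\hat y_i}\le1$, one has
\[
D_t^{\alpha}\hat x_i(t)=\int_0^t h_{1-\alpha}(t-s)\,e^{-\hat y_i(s)}\,\hat y_i'(s)\,ds\ \le\ D_t^{\alpha}\hat y_i(t);
\]
then $D_t^{\alpha}\hat y$ is computed in closed form from \eqref{eqlineal} together with the Mittag-Leffler identities \eqref{eq4}--\eqref{eq5} and the algebraic identity $\Omega g(x_0)+b(x_0)=x_0$, which yield
\[
D_t^{\alpha}\hat y(t)=\beta^{\alpha}E_{\alpha,1}\left((\beta t)^{\alpha}\hat A\right)Ax_0\ \preceq\ \beta^{\alpha}E_{\alpha,1}\left((\beta t)^{\alpha}A\right)Ax_0=\beta^{\alpha}A\,E_{\alpha,1}\left((\beta t)^{\alpha}A\right)x_0=D_t^{\alpha}\tilde x(t),
\]
the middle inequality holding entrywise because $\hat A^{k}A=(A\Omega)^{k}A\preceq A^{k+1}$ when $0\preceq\Omega\preceq I$ and $A\succeq0$. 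Lemma~\ref{lemma1}, applied componentwise to $\tilde x-\hat x$ (which vanishes at $t=0$), then gives $\hat x\preceq\tilde x$. In short, the idea you are missing is to compare the Caputo derivatives $D_t^{\alpha}\hat x$ and $D_t^{\alpha}\tilde x$ directly --- both are explicitly computable or boundable --- rather than to compare $\hat y$ with $g(\tilde x)$.
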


\begin{proof}
First of all, by the theory of fractional calculus, it is well-known
that the solution of the linearized problem \eqref{eqlineal} is given by
\begin{equation}
\hat{y}\left(t\right)=E_{\alpha,1}\left((\beta t)^{\alpha}\hat{A}\right)g\left(x_{0}\right)+\int_{0}^{t}\tau^{\alpha-1}E_{\alpha,\alpha}\left((\beta \tau)^{\alpha}\hat{A}\right)\beta^{\alpha}Ab\left(x_0\right)\,d\tau,\label{soly}
\end{equation}
where the functions $E_{\alpha,1}(\cdot)$ and $E_{\alpha,\alpha}(\cdot)$ are defined as in (\ref{eq: ML definition}).
Therefore, since
\[
\int_{0}^{t}\tau^{\alpha k+\alpha-1}\,d\tau=\frac{t^{\alpha k+\alpha}}{\alpha k+\alpha},
\]
from \eqref{soly} we get \eqref{y}. For more details about linear
fractional models see \cite{AA,ALM,B}, and references therein. Notice
that Eq. (\ref{y}) is the generalized fractional version of the one
obtained by LTE by means of their Theorem \ref{theorem1}. Their specific solution
is recovered when $\alpha=1$ where $E_{1,1}\left(\beta t\hat{A}\right)=\exp\left(\beta t\hat{A}\right)$
and $\Gamma\left(n+2\right)=\left(n+1\right)!$.

We have assumed that $x_{0}=x\left(0\right)=\hat{x}\left(0\right)=\tilde{x}\left(0\right),$
with $y\left(t\right)=g\left(x\left(t\right)\right)$ and $\hat{y}\left(t\right)=g\left(\hat{x}\left(t\right)\right).$
Since $y,\hat{y}$ are non-decreasing functions of $x,\hat{x},$ it
is enough to prove that $y\left(t\right)\preceq\hat{y}\left(t\right)$
to get $x\left(t\right)\preceq\hat{x}\left(t\right).$ Following the
paper of Lee et all, since $f$ is a concave function with $f'\left(y\right)=e^{-y},$
we have
\[
D_{t}^{\alpha}y_{i}\left(t\right)\leq\beta^{\alpha}\sum_{j\in\mathcal{N}}A_{ij}\left(1-x_{j}\left(0\right)\right)y_{j}\left(t\right)+\beta^{\alpha}\sum_{j\in\mathcal{N}}A_{ij}b\left(x_{j}\left(0\right)\right).
\]
Then, since $y\left(0\right)=\hat{y}\left(0\right),$ $D_{t}^{\alpha}y\left(t\right)\preceq D_{t}^{\alpha}\hat{y}\left(t\right),$
so Lemma \ref{lemma1} implies $x\left(t\right)\preceq\hat{x}\left(t\right).$

Now, note that
\[
D_{t}^{\alpha}\hat{x}_{i}\left(t\right)=D_{t}^{\alpha}f\left(\hat{y}_{i}\left(t\right)\right)=\int_{0}^{t}h_{1-\alpha}\left(t-s\right)e^{-\hat{y}_{i}\left(s\right)}\hat{y}_{i}'\left(s\right)\,ds.
\]
Furthermore (\ref{y}) shows that $y_{i}'\left(s\right)\geq0$ for
all $s>0,$ then
\[
0\leq D_{t}^{\alpha}\hat{x}_{i}\left(t\right)\leq\int_{0}^{t}h_{1-\alpha}\left(t-s\right)\hat{y}_{i}'\left(s\right)\,ds=D_{t}^{\alpha}\hat{y}_{i}\left(t\right).
\]

Also, it is well-known  \cite{diethelm2010analysis,gorenflo2014mittag,podlubny1998fractional} (or more recently \cite{AA,ALM,B}) that the
previous Mittag-Leffler matrix functions satisfy
\begin{align}
E_{\alpha,1}\left((\beta t)^{\alpha}\hat{A}\right)&=\left(h_{1-\alpha}*s^{\alpha-1}E_{\alpha,\alpha}\left((\beta s)^{\alpha}\hat{A}\right)\right)\left(t\right)\nonumber\\ &=\int_{0}^{t}h_{1-\alpha}\left(t-s\right)s^{\alpha-1}E_{\alpha,\alpha}\left((\beta s)^{\alpha}\hat{A}\right)\,ds\label{eq4}
\end{align}
and
\begin{align}
E_{\alpha,1}\left((\beta t)^{\alpha}\hat{A}\right)I&=I+\beta^{\alpha}\hat{A}\left(h_{\alpha}*E_{\alpha,1}\left((\beta s)^{\alpha}\hat{A}\right)\right)\left(t\right)\nonumber\\&=I+\beta^{\alpha}\hat{A}\int_{0}^{t}h_{\alpha}\left(t-s\right)E_{\alpha,1}\left((\beta s)^{\alpha}\hat{A}\right)\,ds.\label{eq5}
\end{align}
Then, by (\ref{eqlineal}), (\ref{soly}), (\ref{eq4}) and (\ref{eq5})
one gets
\begin{eqnarray*}
D_{t}^{\alpha}\hat{y}\left(t\right) & = & \beta^{\alpha}\hat{A}E_{\alpha,1}\left((\beta t)^{\alpha}\hat{A}\right)g\left(x_{0}\right)+\beta^{\alpha}\hat{A}\left(h_{1}*s^{\alpha-1}E_{\alpha,\alpha}\left((\beta s)^{\alpha}\hat{A}\right)\right)\left(t\right)\\
 & \times & \beta^{\alpha}Ab\left(x_0\right)+\beta^{\alpha}Ab\left(x_0\right)\\
 & = & \beta^{\alpha}\hat{A}E_{\alpha,1}\left((\beta t)^{\alpha}\hat{A}\right)g\left(x_{0}\right)+\beta^{\alpha}\hat{A}\left(h_{\alpha}*h_{1-\alpha}*s^{\alpha-1}E_{\alpha,\alpha}\left((\beta s)^{\alpha}\hat{A}\right)\right)\left(t\right)\\
 & \times & \beta^{\alpha}Ab\left(x_0\right)+\beta^{\alpha}Ab\left(x_0\right)\\
 & = & \beta^{\alpha}\hat{A}E_{\alpha,1}\left((\beta t)^{\alpha}\hat{A}\right)g\left(x_{0}\right)+\beta^{\alpha}\hat{A}\left(h_{\alpha}*E_{\alpha,1}\left((\beta s)^{\alpha}\hat{A}\right)\right)\\
 & \times & \left(t\right)\beta^{\alpha}Ab\left(x_0\right)+\beta^{\alpha}Ab\left(x_0\right)\\
 & = & \beta^{\alpha}\hat{A}E_{\alpha,1}\left((\beta t)^{\alpha}\hat{A}\right)g\left(x_{0}\right)+E_{\alpha,1}\left((\beta t)^{\alpha}\hat{A}\right)\beta^{\alpha}Ab\left(x_0\right)\\
 & = & \beta^{\alpha}AE_{\alpha,1}\left((\beta t)^{\alpha}\hat{A}\right)x_0,
\end{eqnarray*}
where in the last equality we have used that $\Omega g\left(x_0\right)+b\left(x_0\right)=x_0.$
By definition of Mittag-Leffler matrix function, it is easy to see
that
\[
E_{\alpha,1}\left((\beta t)^{\alpha}\hat{A}\right)x_0\preceq E_{\alpha,1}\left((\beta t)^{\alpha}A\right)x_0,
\]
since $\hat{A}=A\Omega$ with $\Omega=\text{diag}\left(1-x_0\right).$
Therefore
\[
D_{t}^{\alpha}\hat{x}\left(t\right)\preceq D_{t}^{\alpha}\hat{y}\left(t\right)\preceq\beta^{\alpha}AE_{\alpha,1}\left((\beta t)^{\alpha}A\right)x_0=D_{t}^{\alpha}\tilde{x}\left(t\right),
\]
and Lemma \ref{lemma1} implies $\hat{x}\left(t\right)\preceq\tilde{x}\left(t\right).$

Finally, it is known that $\lim_{t\to\infty}\tilde{x}_{i}\left(t\right)=\infty$
and $\lim_{t\to\infty}\hat{y}_{i}\left(t\right)=\infty.$ Since $f$
is continuous, $\lim_{t\to\infty}\hat{y}_{i}\left(t\right)=\infty,$
then $\lim_{t\to\infty}\hat{x}_{i}\left(t\right)=\lim_{t\to\infty}f\left(\hat{y}_{i}\right)\left(t\right)=1.$
Therefore, since $\lim_{t\to\infty}x_{i}\left(t\right)=1$ we conclude
$\|\hat{x}\left(t\right)-x\left(t\right)\|\to0$ and $\|\tilde{x}\left(t\right)-x\left(t\right)\|\to0$
as $t\to\infty.$
\end{proof}
\begin{corollary}
Let $x_{0}\preceq1$, then the solution of \eqref{eqlineal} can be
written as
\begin{equation}
\hat{y}\left(t\right)=g\left(x_{0}\right)+\left[E_{\alpha,1}\left((\beta t)^{\alpha}\hat{A}\right)-I\right]\Omega^{-1}x_0.
\end{equation}
\end{corollary}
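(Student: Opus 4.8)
The plan is to start from the closed form \eqref{y} for $\hat{y}(t)$ obtained in Theorem~\ref{theorem1} and to collapse its series into a single Mittag-Leffler matrix function acting on $\Omega^{-1}x_{0}$. Under the hypothesis (read so that no coordinate of $x_{0}$ equals $1$) the diagonal matrix $\Omega=\text{diag}(1-x_{0})$ is invertible, which is exactly what is needed for $\Omega^{-1}$ to appear in the asserted formula. The crucial algebraic observation is that, since $\hat{A}=A\Omega$, one has $\hat{A}^{k}A=\hat{A}^{k}A\,\Omega\,\Omega^{-1}=\hat{A}^{k+1}\Omega^{-1}$ for every $k\geq0$. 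Substituting this into the series in \eqref{y} and factoring the constant vector $\Omega^{-1}b(x_{0})$ out of the sum rewrites that sum as $\left(\sum_{k=0}^{\infty}\frac{(\beta t)^{\alpha(k+1)}\hat{A}^{k+1}}{\Gamma(\alpha(k+1)+1)}\right)\Omega^{-1}b(x_{0})$.

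Next I would reindex by $j=k+1$ and compare the result with the defining series \eqref{eq: ML definition}, $E_{\alpha,1}\!\left((\beta t)^{\alpha}\hat{A}\right)=\sum_{j=0}^{\infty}\frac{(\beta t)^{\alpha j}\hat{A}^{j}}{\Gamma(\alpha j+1)}$. The reindexed sum is precisely this series with the $j=0$ term (equal to $I$) deleted, so it equals $\left[E_{\alpha,1}\!\left((\beta t)^{\alpha}\hat{A}\right)-I\right]\Omega^{-1}b(x_{0})$. Writing $E:=E_{\alpha,1}\!\left((\beta t)^{\alpha}\hat{A}\right)$, this brings $\hat{y}(t)$ to the intermediate form
\[
\hat{y}\left(t\right)=E\,g\left(x_{0}\right)+\left(E-I\right)\Omega^{-1}b\left(x_{0}\right).
\]

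Finally I would invoke the scalar identity $\Omega g(x_{0})+b(x_{0})=x_{0}$ already used at the end of the proof of Theorem~\ref{theorem1}, rewritten as $g(x_{0})=\Omega^{-1}\!\left(x_{0}-b(x_{0})\right)$. Substituting this into $E\,g(x_{0})$ makes the two occurrences of $E\,\Omega^{-1}b(x_{0})$ cancel, leaving $\hat{y}(t)=E\,\Omega^{-1}x_{0}-\Omega^{-1}b(x_{0})$; replacing $-\Omega^{-1}b(x_{0})$ by $g(x_{0})-\Omega^{-1}x_{0}$ (again by the same identity) yields $g(x_{0})+\left(E-I\right)\Omega^{-1}x_{0}$, which is exactly the claimed expression. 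The computation is entirely routine once the series is recognized, so there is no genuine obstacle; the only delicate point is the invertibility of $\Omega$, i.e.\ that the hypothesis $x_{0}\preceq1$ be understood strictly in each coordinate, consistent with the earlier exclusion of the full-contagion equilibrium $x^{\star}=1$ where $x_{i}(0)=1$ was shown to be impossible.
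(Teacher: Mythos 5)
Your proof is correct and follows essentially the same route as the paper: both factor the inhomogeneous term as $\Omega^{-1}b(x_0)$ using $A=\hat{A}\Omega^{-1}$, collapse it against $E_{\alpha,1}\left((\beta t)^{\alpha}\hat{A}\right)-I$, and finish with the identity $\Omega^{-1}b(x_0)=\Omega^{-1}x_0-g(x_0)$. The only (cosmetic) difference is that you reindex the series form \eqref{y} directly, whereas the paper works from the integral representation \eqref{soly} and cites the identity \eqref{eq5}; these are equivalent manipulations, and your remark that $\Omega$ must be invertible (i.e.\ $x_0$ strictly below $1$ componentwise) makes the hypothesis precise.
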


\begin{proof}
Let us write Eq. (\ref{soly}) in the following way

\begin{equation}
\hat{y}\left(t\right)=E_{\alpha,1}\left((\beta t)^{\alpha}\hat{A}\right)g\left(x_{0}\right)+\int_{0}^{t}s^{\alpha-1}E_{\alpha,\alpha}\left((\beta s)^{\alpha}\hat{A}\right)\beta^{\alpha}A\Omega\Omega^{-1}b\left(x_0\right)\,ds,
\end{equation}
which can be reordered as

\begin{equation}
\hat{y}\left(t\right)=E_{\alpha,1}\left((\beta t)^{\alpha}\hat{A}\right)g\left(x_{0}\right)+\left[\beta^{\alpha}\hat{A}\int_{0}^{t}s^{\alpha-1}E_{\alpha,\alpha}\left((\beta s)^{\alpha}\hat{A}\right)\,ds\right]\Omega^{-1}b\left(x_0\right).
\end{equation}
So, by \eqref{eq5} we have
\begin{equation}
\hat{y}\left(t\right)=E_{\alpha,1}\left((\beta t)^{\alpha}\hat{A}\right)g\left(x_{0}\right)+\left[E_{\alpha,1}((\beta t)^{\alpha}\hat{A})-I\right]\Omega^{-1}b\left(x_0\right).
\end{equation}

Now, it is easy to check that $\Omega^{-1}b\left(x_0\right)=\Omega^{-1}x_0-g\left(x_{0}\right).$
Therefore,

\begin{equation}
\hat{y}\left(t\right)=E_{\alpha,1}\left((\beta t)^{\alpha}\hat{A}\right)\Omega^{-1}x_0-\Omega^{-1}x_0+g\left(x_{0}\right),
\end{equation}
which by reordering gives the final solution.
\end{proof}
Let us now consider $x_{i}\left(0\right)=c/n$, $i=1,2,\ldots,n,$ where $c$ is a positive real, and let $\gamma=1-c/n.$ Noting that $\textnormal{diag}\left(1-x_0\right)=\gamma I$,
then

\begin{align}
\hat{y}\left(t\right) & =\left(\frac{1-\gamma}{\gamma}\right)E_{\alpha,1}\left(t^{\alpha}\beta^{\alpha}\hat{A}\right)\vec{1}-\left(\frac{1-\gamma}{\gamma}+\log\gamma\right)\vec{1}\nonumber \\
 & =\left(\frac{1-\gamma}{\gamma}\right)E_{\alpha,1}\Bigl(t^{\alpha}\beta^{\alpha}A\textnormal{diag}\left(1-x_0\right)\Bigr)\vec{1}-\left(\frac{1-\gamma}{\gamma}+\log\gamma\right)\vec{1}\nonumber \\
 & =\left(\frac{1-\gamma}{\gamma}\right)E_{\alpha,1}\Bigl(t^{\alpha}\beta^{\alpha}\gamma A\Bigr)\vec{1}-\left(\frac{1-\gamma}{\gamma}+\log\gamma\right)\vec{1}.
\end{align}

We should remark that according to the result in Theorem \ref{theorem1} the
solution to the fractional-order SI model obtained here
represents an upper bound to the exact solution. Therefore, we will use
it here as the worse-case scenario for the analysis of perturbations in
real-world PRNs. This means that our results should be interpreted here
not as an approximation to the solution but as the most extreme
situation that can happen in the propagation of a perturbation through a
protein.

\subsection{Why are fractional derivatives needed to study PRNs?}\label{sec: ML matrix functions}

As we have seen in Section \ref{Sect2.2} the upper bound
of the SI model is linearly proportional to $e^{\alpha\beta tA}\vec{1}$,
where $A$ is the adjacency matrix of the graph. That is, the only
structural information about the graph which appears in the solution of
the SI model is contained in $e^{\zeta A}$ where $\zeta$ is a parameter.
Here we first explain how is this information encoded in the matrix
exponential. Let us start by writing
\begin{equation}
e^{\zeta A}=\sum_{k=0}^{\infty}\dfrac{\left(\zeta A\right)^{k}}{k!}.
\end{equation}

\textcolor{black}{We recall that a }\textit{\textcolor{black}{walk}}\textcolor{black}{{}
of length $k$ in $G$ is a set of nodes $i_{1},i_{2},\ldots,i_{k},i_{k+1}$
such that for all $1\leq l\leq k$, $(i_{l},i_{l+1})\in E$. A }\textit{\textcolor{black}{closed
walk}}\textcolor{black}{{} is a walk for which $i_{1}=i_{k+1}$ \cite{Networks_1}.} Then, we state the following well-known result (see \cite{Networks_1} and references therein).
\begin{theorem}
The number of walks of length $k$ between the nodes $u$ and $v$
of the graph $G$ is given by $\left(A^{k}\right)_{uv}.$
\end{theorem}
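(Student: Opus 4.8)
The plan is to prove this by induction on the walk length $k$, using the combinatorial interpretation of matrix multiplication. The base case $k=1$ is immediate: a walk of length $1$ from $u$ to $v$ is precisely an edge $\{u,v\}\in E$, and by the definition of the adjacency matrix the number of such edges is $A_{uv}$, which equals $\left(A^{1}\right)_{uv}$.

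For the inductive step, I would assume that for some $k\geq1$ the number of walks of length $k$ between any pair of nodes $i$ and $j$ equals $\left(A^{k}\right)_{ij}$. The key observation is that every walk of length $k+1$ from $u$ to $v$ decomposes uniquely as a walk of length $k$ from $u$ to some intermediate vertex $w$, followed by a single edge from $w$ to $v$. Summing over all possible intermediate vertices $w\in V$ and using the inductive hypothesis together with the fact that $A_{wv}$ counts the edges from $w$ to $v$ gives
\[
\#\{\text{walks of length } k+1 \text{ from } u \text{ to } v\}
=\sum_{w\in V}\left(A^{k}\right)_{uw}\,A_{wv}
=\left(A^{k}A\right)_{uv}
=\left(A^{k+1}\right)_{uv},
\]
where the middle equality is exactly the definition of matrix multiplication. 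This closes the induction and establishes the claim.

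The argument is essentially routine; the only point requiring genuine care is verifying that the map sending a length-$(k+1)$ walk to the pair consisting of its length-$k$ initial segment and its terminal edge is a bijection, so that no walk is double-counted and none is omitted. Since a walk is by definition an ordered sequence of vertices, this correspondence is transparent, and the decomposition is valid regardless of the graph being simple and undirected—the symmetry of $A$ plays no role beyond ensuring the counts are consistent in either direction. I therefore expect no real obstacle, only the need to state the bijection cleanly.
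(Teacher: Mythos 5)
Your proof is correct: it is the standard induction on walk length, decomposing each walk of length $k+1$ into its initial $k$-walk plus a terminal edge and recognizing the resulting sum over intermediate vertices as the $(u,v)$ entry of $A^{k}A$. The paper itself gives no proof of this theorem---it states it as a well-known result with a citation---and the argument you supply is exactly the canonical one found in those references, so there is no divergence to report.
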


This means that $\left(e^{\zeta A}\right)_{uv}$ counts the number of walks
of any length between the nodes $u$ and $v$ of $G$ penalizing them
by $\left(k!\right)^{-1},$ where $k$ is the length of the walk. Obviously,
$\left(e^{\zeta A}\right)_{uv}$ penalizes too heavily relatively
long walks. For instance, while a walk of length two contributes $0.5\zeta^{2}$
to $\left(e^{\zeta A}\right)_{uv}$, a walk of length 6 contributes $0.0014\zeta^{6}$. Then, if $\zeta<1$ the last contribution is practically
null.

It is known that the transmission of perturbations through a PRN is characterized by
two main properties:
\begin{enumerate}
\item there is a wide range of time frames, ranging from $10^{-3}$ seconds for conformational
transitions to $10^{-12}$ seconds for hydrogen bond breaking, rotational
relaxation and translational diffusion \cite{NMR};
\item the existence of long-range transmission of effects, which has been observed to
take place even between amino acids separated 100 Å apart \cite{Ottemann}.
Notice that in terms of the PRN this represents a transmission between
two nodes separated by 14 edges in the network.
\end{enumerate}
The function $e^{\zeta A}$ along cannot account for the previously
mentioned important characteristics of protein perturbations. Once
we consider a given network and a fixed value of $\zeta,$ the function
$e^{\zeta A}$ can describe only one process in the wide time-window
previously described. For instance, suppose that such process is one occurring at the $10^{-3}$ seconds scale. For the same network and conditions we cannot model another process occurring at the $10^{-10}$ seconds scale with the same mathematical model. At the same time this function penalizes
very heavily the long-range transmission of perturbation effects,
also avoiding a complete characterization of the physico-chemical
process.

In contrast, the Mittag-Leffler matrix functions, which appear in
the solution of the fractional-order SI model, are expressed in the
following way \cite{ML-1,Garrappa_1,ML-3,ML-4}
\begin{equation}
E_{\alpha,\nu}\Bigl(\zeta A\Bigr)=\sum_{k=0}^{\infty}\dfrac{\left(\zeta A\right)^{k}}{\Gamma\left(\alpha k+\nu\right)},\quad \alpha,\nu>0,.
\end{equation}
Then, for a fixed network topology and fixed external conditions $\zeta$
we can still model several processes at different time-windows by changing the Mittag-Leffler
parameter $\alpha$. For instance, we can consider a process occurring at the micro-second scale modeled by using $\alpha=1.0$, while another process occurring in the same network at the pico-second scale by using $\alpha=0.25$. This is illustrated in Fig. \ref{relative}(a) were
we plot the time evolution of the propagation of perturbations on
a cycle of 15 nodes for $\zeta=1$. As can be seen the time at which
50\% of the nodes are perturbed changes from 242 with $\alpha=1$
to 21 for $\alpha=0.25$. This simple graph, a cycle, is a good example
of some structures appearing in PRNs, named the chordless cycles or
holes. A chordless cycle, also known as induced cycle, is a cycle
which contains no edge which does not itself belongs to the cycle.
Holes are ubiquitous in proteins \cite{PRN_2} and they may represent
important binding sites in them.

The Mittag-Leffler matrix functions also allow to describe the second characteristic
of the propagation of perturbations through proteins, i.e., the existence
of long-range interactions. While $\left(e^{\zeta A}\right)_{uv}$
penalizes very heavily long-range perturbations, $E_{\alpha,1}\Bigl(\zeta A\Bigr)$
allows us to modulate such effects by changing the parameter $\alpha$.
For instance, let us consider a perturbation at a given node of the
cycle of 15 nodes previously considered here. This perturbation can
be transmitted across the cycle in no more than 7 steps, i.e., the
diameter of the graph. When $\alpha=1$, i.e., $E_{1,1}\Bigl(\zeta A\Bigr)=\exp\Bigl(\zeta A\Bigr)$,
the transmission of this perturbation to nodes at more than 5 steps
from the origin is practically null. As can be seen in Fig. \ref{relative}(b)
this situation changes when we drop the value of $\alpha$. When $\alpha=0.5$
we have 10\% of transmission to the farthest node relative to the transmission
to the nearest neighbors. When $\alpha=0.25$ the transmission to farthest
neighbors is almost unchanged in relation to that of the transmission
to nearest neighbors, which may seem exaggerated in physical conditions
of proteins.

\begin{figure}
\begin{centering}
\subfloat[]{\begin{centering}
\includegraphics[width=0.45\textwidth]{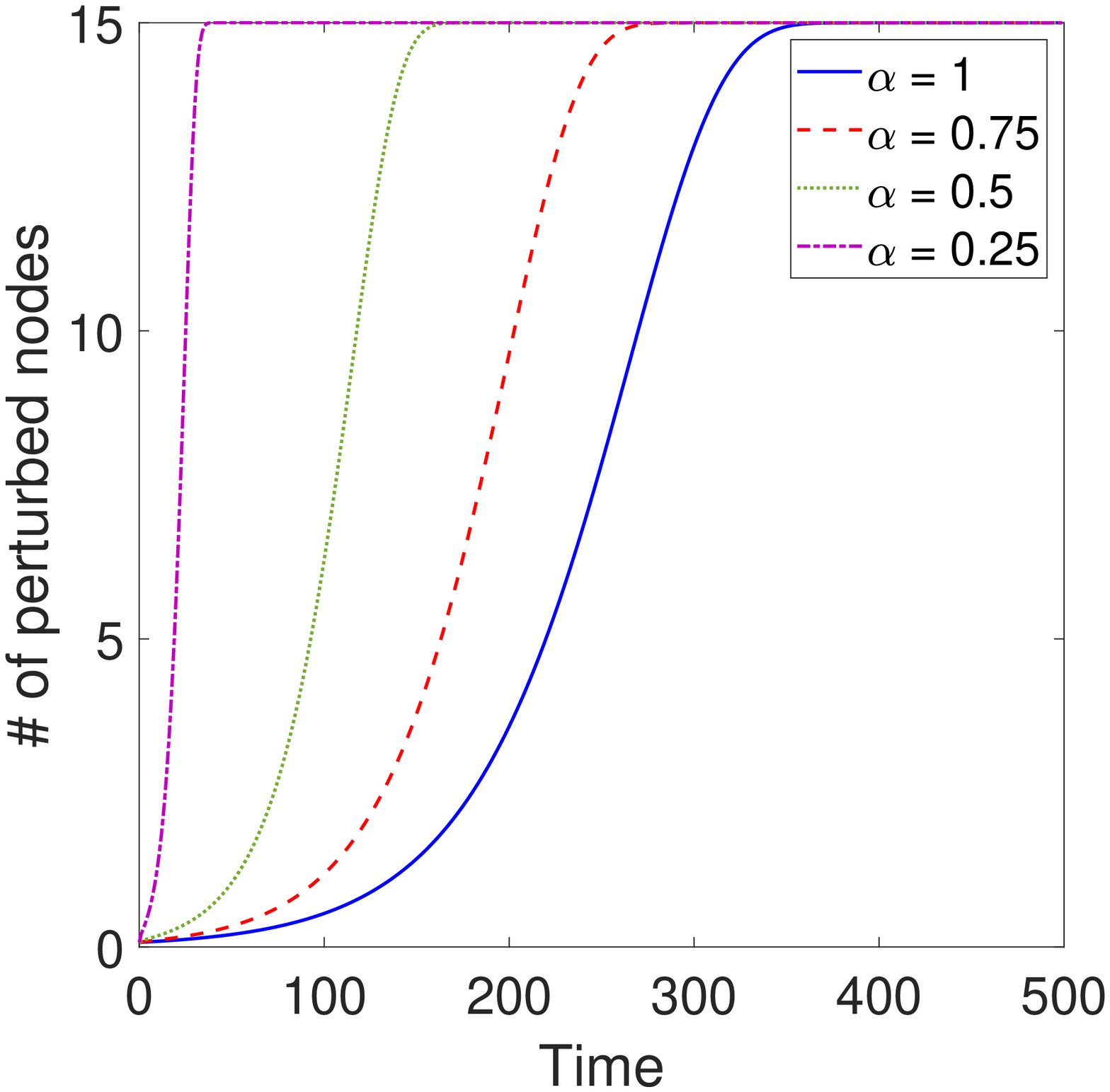}
\par\end{centering}

}\subfloat[]{\begin{centering}
\includegraphics[width=0.45\textwidth]{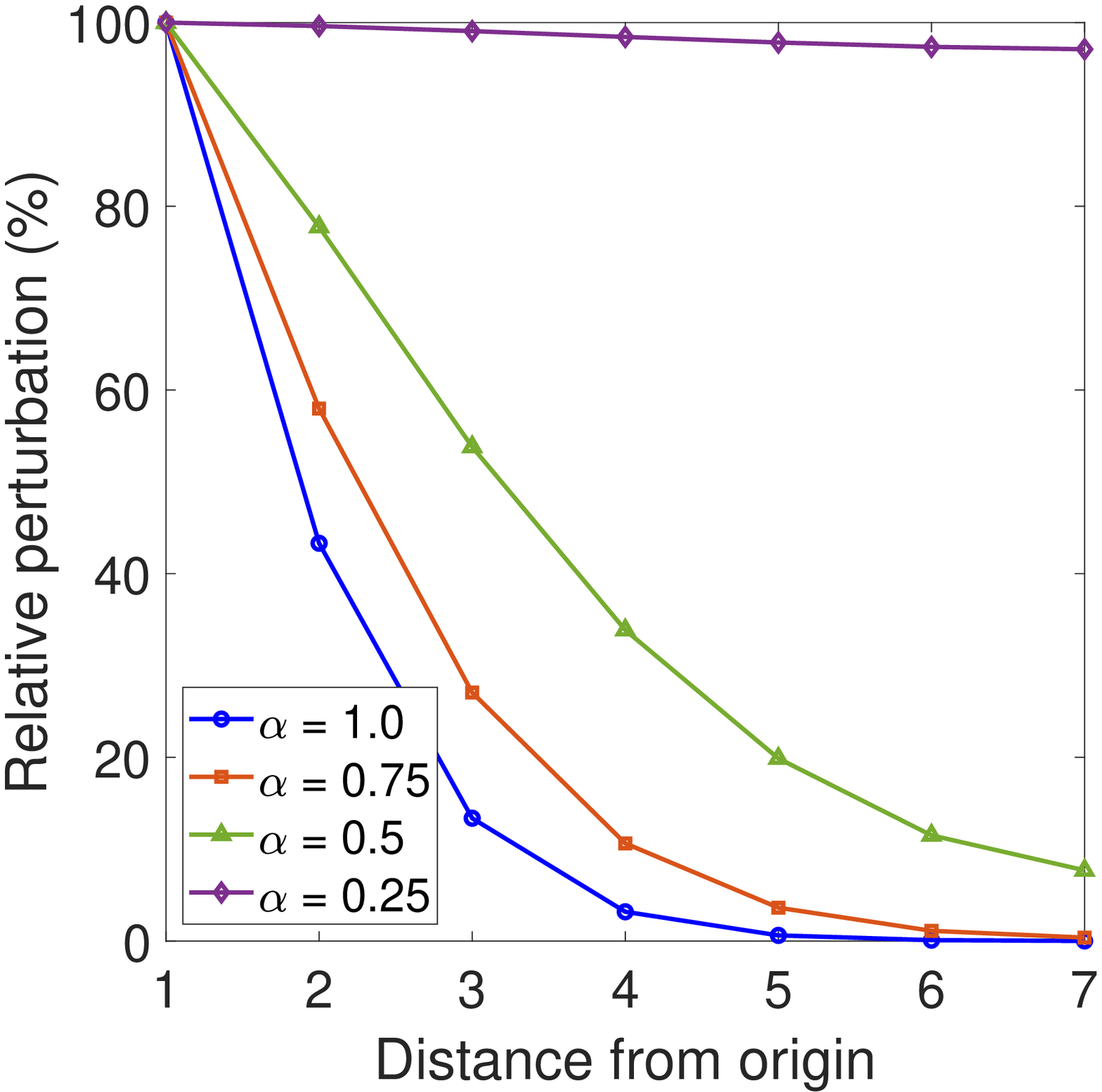}
\par\end{centering}
}
\par\end{centering}
\caption{Illustration of the effects of changing the parameter $\alpha$ in
the Mittag-Leffler function for the transmission of perturbations
in a cycle of 15 nodes.}

\label{relative}
\end{figure}

In closing, the Mittag-Leffler matrix functions, and consequently
the use of a fractional-order SI model, are important for modeling
the transmission of perturbations across PRNs because they allow to
capture important spatial and temporal characteristics of protein
perturbations, which are limited with the use of the classical SI
model.

\section{Computational Results}

Here we apply our model to the study of the M\textsuperscript{pro}
of SARS CoV-2 complexed with three inhibitors: PDB codes 6M0K and
6LZE from \cite{Inhibitors_11} and 6Y2G from \cite{Crystal-Mpro}.
We compare the results obtained with the free protease structure:
PDB 6Y2E. All calculations
are carried out on Matlab. For the Mittag-Leffler matrix functions
we use the Matlab function ``ml\_matrix.m'' provided by Garrappa
and Popolizio \cite{Garrappa_1,Garrappa_2}. The three inhibitors selected for our study have been reported
to display potent inhibitory capacity against SARS CoV-2. This potency
is measured through their inhibitory concentration $IC_{50}$, which
is the concentration of the inhibitor needed in vitro to inhibit the
virus by 50\%. For the simulations we use here $\beta=0.01$, $c=0.005,$ $\gamma=1-\frac{c}{n},$
and compare the results for $\alpha=\frac{1}{2}$ and when $\alpha=1$.
In Fig. \ref{Evolution} we illustrate the time evolution of the number
of perturbed amino acids in the complexes studied as well as in the
free protease (the last curve is overlapped by that of complex with
6LZE). There are two interesting observations from these
plots. First, the use of $\alpha=1/2$ produces a 10-fold
reduction of the time needed to reach the steady state of the process,
i.e., to perturb 100\% of the amino acids in the protease. The second
is that the order at which the different complexes reaches 50\% of
the amino acids perturbed is: 6M0K$ <$6LZE$<$6Y2G for both values of $\alpha,$ which is exactly the order of potency of the inhibitors towards SARS CoV-2.

\begin{figure}
\begin{centering}
\subfloat[]{\begin{centering}
\includegraphics[width=0.45\textwidth]{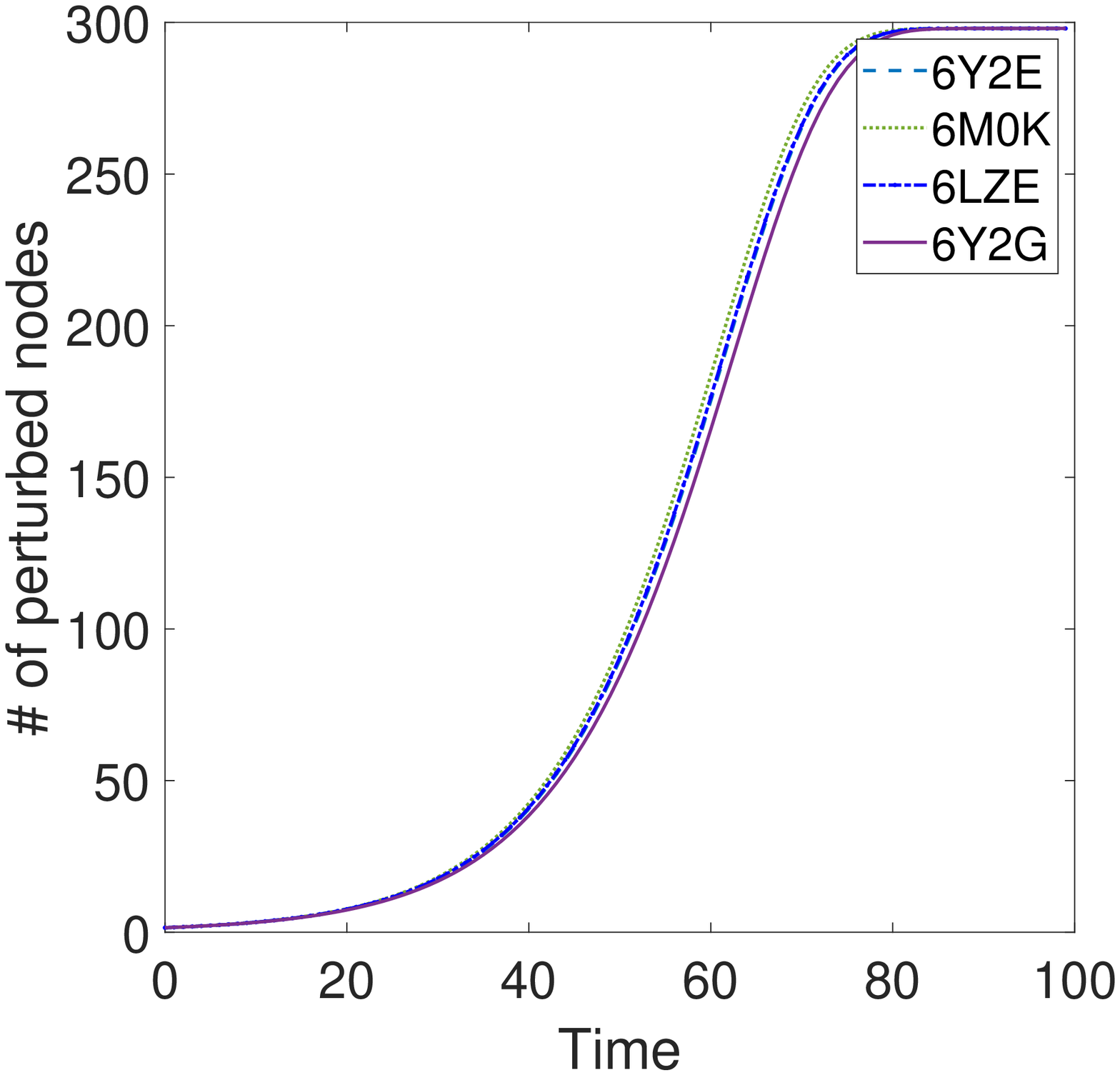}
\par\end{centering}
}\subfloat[]{\begin{centering}
\includegraphics[width=0.45\textwidth]{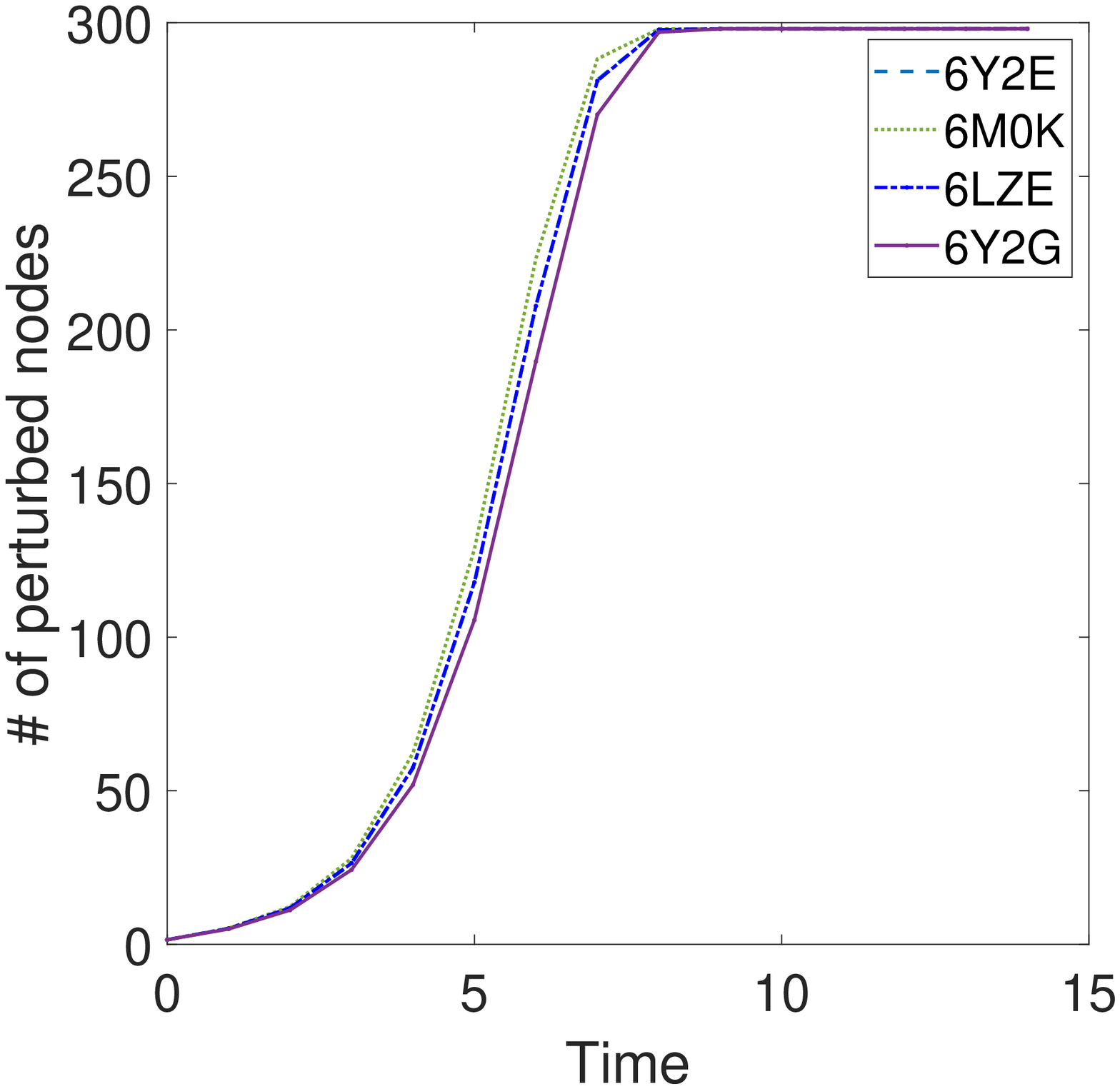}
\par\end{centering}
}
\par\end{centering}
\caption{Time evolution of the upper bounds of the normal (a) and fractional
(b) SI model for the main protease of CoV-2 bounded to three inhibitors
as well as free with $\beta=0.01$, $\gamma=1-\frac{c}{n},$ $c=0.005.$}

\label{Evolution}
\end{figure}

In order to gain more insights about the influence of the two different
dynamics on the propagation of a perturbation across the CoV-2 M\textsuperscript{pro}
when bounded with inhibitors we study the structural contributions
from each of the structures to the SI dynamics. In doing so we calculate
the relative differences in the individual components of the transmissibility
of this perturbation from one residue to another, $G_{ij}^{\alpha}$,

\begin{equation}
\varDelta G_{ij}^{\alpha}=\dfrac{1}{n\left(n-1\right)}\sum_{i\neq j}\dfrac{G_{ij}^{\alpha}\left(\textnormal{bounded}\right)-G_{ij}^{\alpha}\left(\textnormal{free}\right)}{G_{ij}^{\alpha}\left(\textnormal{free}\right)},
\end{equation}
where $G_{ij}^{\alpha}=[E_{\alpha,1}\biggl( (\beta t)^{\alpha}\gamma A \biggr)]_{ij}.$

We have selected the time at which 50\% of the amino acids in the
protease are perturbed, which occurs at $t=6$ ($\alpha=1/2$) and
$t=50$ ($\alpha=1$) for doing the calculations. The rest of the
parameters remain the same for both descriptors, i.e., $\beta=0.01$,
$\gamma=1-\frac{c}{n},$ $c=0.005.$ We also selected the top ten
pairs of amino acids according to their values of $\varDelta G_{ij}$.
For these pairs we have calculated the average length $\bar{L}$ of
the shortest paths connecting the pair of residues. For instance,
in 6M0K for $\alpha=1.0$ the largest value of $\varDelta G_{ij}$
is for the pair L167-K269 for which the shortest path has length 8.
For the same complex but using $\alpha=0.5$ the largest value of
$\varDelta G_{ij}$ is for the pair L167-M276 for which the shortest
path has length 10. In addition, we determine which $N_{BS}$ of these
pairs of residues in the top ten ranking according to $\varDelta G_{ij}$ is involved directly in the binding site
of the protease or it is bounded to one of them. For instance, for
the case of the pair before mentioned for 6M0K ($\alpha=1.0$) the
residue L167 is directly bounded to two amino acids in the binding
site, namely E166 and P168.

\begin{table}
\begin{centering}
\begin{tabular}{|c||c|c|c|c|c|c|c|}
\hline
 & \multicolumn{3}{c|}{$\alpha=1.0$} & \multicolumn{3}{c|}{$\alpha=0.5$} & \multirow{2}{*}{$IC_{50}$$\left(\mu M\right)$}\tabularnewline
\cline{1-7} \cline{2-7} \cline{3-7} \cline{4-7} \cline{5-7} \cline{6-7} \cline{7-7}
Inhibitor & $\varDelta G_{ij}$ (\%) & $\bar{L}$ & $N_{BS}$ & $\varDelta G_{ij}$ (\%) & $\bar{L}$ & $N_{BS}$ & \tabularnewline
\hline
\hline
6M0K & 147.4 & 8.7 & 6 & 70.7 & 9.1 & 7 & $0.04\pm0.002$\tabularnewline
\hline
6LZE & 62.3 & 7.5 & 6 & 13.4 & 8.1 & 6 & $0.053\pm0.005$\tabularnewline
\hline
6Y2G & 57.2 & 7.8 & 6 & -4.0 & 5.8 & 3 & $0.67\pm0.18$\tabularnewline
\hline
\end{tabular}
\par\end{centering}
\caption{Average change individual transmissibility of perturbations between
amino acids in CoV-2 M\protect\textsuperscript{pro} bounded to inhibitors
relative to the free protease. The average path length $\bar{L}$
for paths between the top ten pairs of amino acids according to $\varDelta G_{ij}$
and the number of times a residues in one of these paths is located
in the binding site of the protease, $N_{BS}$.}

\label{relative_changes}
\end{table}

According to the results given in Table \ref{relative_changes} we
can extract the following conclusions. For $\alpha=1.0$, the values
of $\varDelta G_{ij}$ indicate that the three inhibitors increase
the transmissibility of perturbations across the protein in relation to
the free protease. The trend in these percentages of change is parallel
to that of the inhibitory power of the inhibitors. That is, the most
potent inhibitor increases more the transmissibility of effects across
the protease than the second most powerful one, and the least powerful
is the one with the poorer increase in $\varDelta G_{ij}$. However,
neither $\bar{L}$ nor $N_{BS}$ display a consistent pattern of change
in relation to the values of $IC_{50}$$\left(\mu M\right)$. In contrast,
when $\alpha=0.5$ we observe some significant and physically sounded
trends for the three parameters studied. First, the most powerful
inhibitor increases by 71\% the transmissibility of perturbations through
the main protease after its binding. It is followed by the second
most powerful inhibitor, which increases modestly the transmissibility
of perturbations by 13\%. However, the weakest inhibitor does not
increases, but decreases, the transmissibility of perturbations across
the protein. Notice that there is one order of magnitude between the potencies of the first two inhibitors (6M0K and 6LZE) and the
third one (6Y2G). In addition, here, the average length of the shortest
paths connecting the pairs of residues with the largest increase in
the transmissibility of effects follow the same trend as the inhibitory
potency. The most potent inhibitor perturbs an average of 9 residues
per perturbation path. The second most powerful inhibitor perturbs
an average of 8 residues per shortest paths, and the weakest inhibitor
perturbs only 6. This is a physically sounded result as the most powerful
inhibitor produces a stronger effect on the protease which is ``felt''
by a larger number of residues in the structure. Finally, it is also
remarkable that the number of residues in, or close to, the binding
site, correlates with the inhibitory power of the inhibitor.
In this case, the most powerful one starts 70\% of the most important
perturbations according to $\varDelta G_{ij}$ at the binding site,
while the weakest initiates only 30\% of these perturbations at the
binding site.

In terms of the geometric distance between the residues in the perturbed
protease we also observe similar characteristics as for the case of
the length of the shortest path. For instance, for $\alpha=1$ the
average geometric separation of amino acids in the 10 most perturbed
pairs is 33.4 \AA  (6M0K), 28.7 \AA (6LZE) and 29.6 \AA (6Y2G). Here again
we observe a clear lack of correlation with the potency of the inhibitors.
However, for $\alpha=0.5$ we have 35.8 \AA (6M0K), 30.5 \AA (6LZE) and
21.3 \AA (6Y2G), in clear agreement with the trend of inhibitory potency
of the inhibitors.

\section{Conclusions}

There are two main conclusions in the current work. The first is that
we have proposed a generalized fractional-order SI model which includes
the classical SI model as a particular case. We have found an upper
bound to the exact solution of this model, which under given initial
conditions depends only on the Mittag-Leffler matrix function of the
adjacency matrix of the graph. The most important characteristic of this fractional-order
SI model is that it allows to account for long-range interactions
between the nodes of a network as well as for different time-windows
on the transmission of perturbations on networks by tuning the fractional
parameter $\alpha$ of the model. Both characteristics are of great
relevance in many different applications of complex systems ranging
from biological to social systems, and in particular for the study
of protein residue networks.

The second main conclusion of this work is that the fractional-order
SI model allowed us to extract very important information about the
interaction of inhibitors with the main protease of the SARS CoV-2.
This structural information consists in the transmission of perturbations
produced by the inhibitors at the binding site of the protease to
very distant amino acids in other domains of the protein. More importantly,
our findings suggest that the length of this transmission seems to
reflect the potency of the inhibitor. That is, the more powerful inhibitors
transmit perturbations to longer distances through the protein. On
the contrary, weaker inhibitors do not propagate such effect beyond
6 edges from the binding site as average. Consequently, these findings
are important for understanding the mechanisms of actions of such
inhibitors on SARS CoV-2 M\textsuperscript{pro} and helping in the
design of more potent drug candidates against this new coronavirus.
Of course, the current approach can be extended and used for the analysis
of other inhibitors in other proteins not only using experimental
data like in here but using computational analysis of such interactions.

\smallskip
\section*{Acknowledgements}

We thank the Editor and the three anonymous referees for useful suggestions that improve the presentation of this work.

The first author has been partly supported by Project MTM2016-77710-P, DGI-FEDER, of the MCYTS,
Project E26-17R, D.G. Arag\'{o}n, and Project for Young Researchers, Fundaci\'{o}n Ibercaja and Universidad de
Zaragoza, Spain.


\bibliographystyle{abbrv}
\bibliography{fractional_SI}

 \bigskip \smallskip

 \it

 \noindent
\bigskip \smallskip

 \it

 \noindent
\bigskip \smallskip

 \it

 \noindent
$^{1}$Departamento de Matem\'{a}ticas, Facultad de Ciencias\\ Universidad de Zaragoza, 50009 Zaragoza, Spain.

e-mail: labadias@unizar.es\\

$^{2}$Instituto Universitario
de Matem\'{a}ticas y Aplicaciones\\ Universidad de Zaragoza, 50009 Zaragoza,
Spain.

email: estrada66@unizar.es\\

$^{3}$Laboratoire Jacques-Louis Lions, Universit\'{e} Pierre-et-Marie-Curie
(UPMC)\\ 4 place Jussieu, 75005, Paris, France.

email: estradarodriguez@ljll.math.upmc.fr  \\

$^{4}$ARAID Foundation,
Government of Arag\'{o}n\\ 50018 Zaragoza, Spain.

email: estrada66@posta.unizar.es\\

\end{document}